\theoremstyle{theorem}
\newtheorem{theorem}{Theorem}[section]
\theoremstyle{definition}
\newtheorem{example}{Example}[section]
\newtheorem{definition}{Definition}[section]
\begin{document}

\title{Local Lexing\thanks{This work has been funded by EPSRC grant 
\href{http://gow.epsrc.ac.uk/NGBOViewGrant.aspx?GrantRef=EP/L011794/1}{EP/L011794/1}.}}

%
%

\authorinfo{Steven Obua}
{University of Edinburgh}
{steven.obua@gmail.com}
\authorinfo{Phil Scott}
{University of Edinburgh}
{phil.scott@ed.ac.uk}
\authorinfo{Jacques Fleuriot}
{University of Edinburgh}
{jdf@inf.ed.ac.uk}

\maketitle

\begin{abstract}
We introduce a novel parsing concept called \emph{local lexing}. It integrates the classically separated stages of lexing and parsing by allowing lexing to be dependent upon the parsing progress and by providing a simple mechanism for constraining lexical ambiguity. This makes it possible for language design to be composable not only at the level of context-free grammars, but also at the lexical level. It also makes it possible to include lightweight error-handling directly as part of the language specification instead of leaving it up to the implementation.  

We present a high-level algorithm for local lexing, which is an extension of Earley's algorithm. We have formally verified the correctness of our algorithm with respect to its local lexing semantics in Isabelle/HOL.
\end{abstract}

\section{Introduction}
\label{sec:introduction}

\newcommand{\nonterminals}{\ensuremath{\mathfrak{N}}}
\newcommand{\terminals}{\ensuremath{\mathfrak{T}}}
\newcommand{\startsymbol}{\ensuremath{\mathfrak{S}}}
\newcommand{\rules}{\ensuremath{\mathfrak{R}}}
\newcommand{\lexer}{\ensuremath{\textsl{Lex}}}
\newcommand{\selector}{\ensuremath{\textsl{Sel}}}
\newcommand{\lang}{\ensuremath{\mathcal{L}}}
\newcommand{\prefixlang}{\ensuremath{\lang_\text{prefix}}}
\newcommand{\derives}{\ensuremath{\overset{*\ }{\Rightarrow}}}
\newcommand{\lderives}{\ensuremath{\ {\overset{*\ }{\Rightarrow}}_L}\ }
\newcommand{\terminalsof}[1]{\ensuremath{[#1]}}
\newcommand{\charsof}[1]{\ensuremath{\overline{#1}}}
\newcommand{\locallexing}{\ensuremath{\ell\ell}}
\newcommand{\lextoken}[2]{\small\ensuremath{\dfrac{\texttt{#1}}{\textsl{#2}}}}
\newcommand{\sele}{\ensuremath{<_\selector}}
\newcommand{\nterminal}[1]{\ensuremath{\textsl{#1}}}
\newcommand{\terminal}[1]{\ensuremath{\textsl{#1}}}
\newcommand{\closure}[3]{\ensuremath{\textsl{closure}_{#1}\, #2\ #3}}
\newcommand{\emptyseq}{\ensuremath{\varepsilon}}
\newcommand{\seqlen}[1]{\ensuremath{\left|{#1}\right|}}
\newcommand{\charslen}[1]{\seqlen{\charsof{#1}}}
\newcommand{\paths}[2]{\ensuremath{\mathcal{P}_{#1}^{#2}}}
\newcommand{\tokensat}[1]{\ensuremath{\mathcal{X}_{#1}}}
\newcommand{\admissibletokens}[2]{\ensuremath{\mathcal{W}_{#1}^{#2}}}
\newcommand{\selectedtokens}[2]{\ensuremath{\mathcal{Z}_{#1}^{#2}}}
\newcommand{\appendtokens}[1]{\ensuremath{\operatorname{Append}_{#1}}}
\newcommand{\limit}[2]{\ensuremath{\operatorname{limit}\ {#1}\ {#2}}}
\newcommand{\allpaths}{\ensuremath{\mathfrak{P}}}
\newcommand{\allitems}{\ensuremath{\mathfrak{I}}}

\noindent The traditional approach to specifying the syntax of a computer language is to define two components, a \emph{lexer} (also called \emph{scanner}) and a \emph{parser}. The lexer partitions the input document consisting of a sequence of \emph{characters} into a sequence of \emph{tokens}. Each token is uniquely associated with a \emph{terminal}, such that the sequence of tokens can be viewed as a sequence of terminals. The parser is typically defined by a \emph{context-free grammar} (CFG), and checks if the sequence of terminals is in the language generated by the CFG.  

CFGs are a powerful language design tool. A most important property of CFGs is composability. Given two or more CFGs, it is easy to combine them into a single CFG in various ways in order to specify a language as a mashup of several other languages, which is a common scenario in modern programming environments.  

The lexer component of this traditional setup is problematic though in such a mashup scenario. A keyword in one language might be an identifier in another language, rendering the lexers of these two languages incompatible with each other. The problem is that lexers are not supposed to be composed in the traditional setup. Practical solutions to this problem usually involve some form of ad-hoc communication between parser and lexer stages, thus shifting an issue which should be dealt with at the language design level to the implementation level. 

The main reasons for the traditional split of syntax recognition into lexing and parsing are \emph{speed}, \emph{expressivity} and \emph{convenience}:
\begin{description}
\item[Speed] Typically terminals are specified via regular expressions that can be recognized with deterministic finite state machines. This is usually much faster than parsing with respect to a CFG. 
\item[Expressivity] Despite CFGs being more expressive than regular expressions, two common lexing rules cannot be expressed with CFGs: 
\emph{longest-match}, and \emph{priority}. The longest-match rule states that if multiple terminals are associated with regular expressions that could all match prefixes of the rest of the sequence of characters to be scanned, then the terminals which match the longest prefix are to be preferred. The priority rule comes into play if after application of the longest-match rule there are still at least two different terminals left as possible candidates: then a linear priority order among terminals is assumed, and the terminal with the highest priority among all candidates is picked. 
\item[Convenience] In this setup, whitespace and comments are usually special terminals which do not appear in any grammar rule, but which are filtered out of the sequence of terminals during the lexing stage.
\end{description} 

\noindent\emph{Scannerless parsing}~\cite{scannerless} proposes to solve our lexing problem by relinquishing separate lexing, effectively identifying characters, tokens and terminals. This negatively affects all three mentioned advantages of a separate lexing stage, the most severely affected being expressivity: in order to approximate the missing longest-match and priority rules, scannerless parsing introduces follow restrictions and reject productions, which are awkard additions to the elegant formalism of context-free grammars because they destroy the nice composability properties of context-free grammars. A scannerless parsing technique called \emph{packrat parsing} goes even further and does away with context-free grammars entirely, employing \emph{parsing expression grammars} instead~\cite{peg}, again to the detriment of composability. 

Instead, we propose a new parsing semantics which we call \emph{local lexing}. Local lexing keeps the distinction between lexing and parsing, and between characters, tokens and terminals. But instead of deterministically converting a sequence of characters into a sequence of tokens, local lexing converts a sequence of characters into a \emph{set} of token sequences, applying lexing and parsing in an intertwined manner. 

The contributions of this paper are as follows: We first define the novel concept of local lexing in Section~\ref{sec:lldef}. We then present examples of applications of local lexing in Section~\ref{sec:applications}. These examples show that
local lexing is a generalisation of the traditional setup, and that local lexing readily allows integrated access at the language design level to issues such as lexical composability and error-handling. In 
Section~\ref{sec:algorithm} we describe a high-level algorithm which implements local lexing as an extension of Earley's algorithm. We have formally verified the correctness of this algorithm in Isabelle/HOL~\cite{isabelle}, and provide an outline of this correctness proof in Section~\ref{sec:proof}. The full proof (a total of 11411 lines or about 230 pages) is available in \cite{locallexingtheories}. We also provide a practical library for local lexing~\cite{locallexingprototype}, written in \mbox{Scala/Scala.js}~\cite{scala,scalajs}. The library contains examples of its application to the examples in Section~\ref{sec:applications}. Before concluding, we discuss further related work in Section~\ref{sec:relatedwork}.

\section{Definition of Local Lexing}
\label{sec:lldef}

Before defining local lexing, we remind the reader of a few basic notions. For a set $U$ we let $U^*$ denote the set of sequences with elements in $U$,
$\emptyseq \in U^*$ denotes the empty sequence, and for two sequences $\alpha \in U^*$ and $\beta \in U^*$ we let $\alpha \beta \in U^*$ denote their
concatenation. Given a sequence $\alpha \in U^*$, 
we denote its length by $\seqlen{\alpha}$ and let $\alpha_i \in U$ denote the $i$-th element of $\alpha$ for $i \in \{0, \ldots, |\alpha| - 1\}$.
A context-free grammar is a quadruple $(\nonterminals, \terminals, \rules, \startsymbol)$, where $\nonterminals$ is the set of 
nonterminals, $\terminals$ the set of terminals (such that $\nonterminals$ and $\terminals$ are disjoint), 
$\rules \subseteq \nonterminals \times (\nonterminals \cup \terminals)^*$ the rules of the grammar and $\startsymbol \in \nonterminals$ the start symbol.
Instead of $(\nterminal{N}, \alpha) \in \rules$ we often write $\nterminal{N} \rightarrow \alpha$. 
For $\alpha, \beta \in (\nonterminals \cup \terminals)^*$ we say $\alpha \Rightarrow \beta$ iff there are $\alpha_0, \nterminal{N}, \alpha_1$ and $\gamma$
such that $\alpha = \alpha_0 \nterminal{N} \alpha_1$, $\beta = \alpha_0 \gamma \alpha_1$ and $\nterminal{N} \rightarrow \gamma$. Furthermore, we write \derives\ for the reflexive and transitive closure of $\Rightarrow$. We define the language $\lang$ of a grammar $G$ as the set of all sequences of terminals derivable from the 
start symbol, $\lang = \{ w \in \terminals^* \, | \, \startsymbol \derives w \}$. 
Furthermore we define the set of \emph{prefixes} 
$\prefixlang$ of $G$ via 
\[\prefixlang = \left\{ w \in \terminals^* \, | \,\exists\, \alpha \in (\nonterminals \cup \terminals)^*.\, \startsymbol \derives w\alpha \right\}.\]

\begin{definition}[Token]
Given a set of terminals $\terminals$, and a set of characters $\Sigma$, a \emph{token} $x$ is a pair $x = (t, c) \in \terminals \times \Sigma^*$.
In examples we will often use the notation \[\lextoken{$c$}{$t$}\] for the token $x$. 
We call the token \emph{empty} iff $|c| = 0$.
We define $\terminalsof{x} = t$ and $\charsof{x} = c$.
We lift these notations in the obvious manner to sequences of tokens: 
given a token sequence $q = x_0 \ldots x_r \in (\terminals \times \Sigma^*)^*$ we define
$\terminalsof{q} = \terminalsof{x_0} \ldots \terminalsof{x_r} \in \terminals^*$ and $\charsof{q} = \charsof{x_0} \ldots \charsof{x_r} \in \Sigma^*$.
\end{definition}

\begin{definition}[Local Lexing] 
Given a set of terminals $\terminals$, and a set of characters $\Sigma$, we call a pair $(\lexer, \selector)$ a \emph{local lexing} (with respect to 
$\terminals$ and $\Sigma$) iff:
\begin{itemize}
\item The \emph{lexer} $\lexer$ assigns to each terminal $t \in \terminals$ a lexing function $\lexer(t)$ which, given a sequence of characters $D \in \Sigma^*$ and a position $k \in \{0, \ldots, |D|\}$, returns a set consisting of tokens $(t, c)$ such that $k + |c| \leq |D|$ and
$c_i = D_{k + i}$ for all $i$ such that $0 \le i \le |c| - 1$.
\item The \emph{selector} $\selector$ takes two token sets $A$ and $B$ such that $A \subseteq B$ and returns a token set 
$\selector(A, B)$ such that $A \subseteq \selector(A, B) \subseteq B$. We usually define $\selector$ indirectly by defining a strict partial order $\sele$ on tokens and setting
\[ \selector(A, B) = A \cup \left\{ x \in B \mid \forall\, y \in B.\ \neg\ x \sele y \right\}. \]
\end{itemize}
\end{definition}

\paragraph{Semantics of Local Lexing}

Given a local lexing $\locallexing$, what is its semantics, i.e. how does it convert a sequence $D$ of characters into a set $\locallexing(D)$ of token sequences? Note that while for \emph{defining} a particular local lexing $\locallexing$ we do not need a context-free grammar, just its sets of terminals and characters, the \emph{semantics} of $\locallexing$ is dependent upon a grammar. In the following we will often call token sequences \emph{paths}.

The conversion process works as follows:
We go through $D$ from left to right, producing sets of paths $\paths k u \subseteq (\terminals \times \Sigma^*)^*$ along the way. The index $k \in \{0, \ldots, |D|\}$ denotes the position in $D$ we are looking at, and we use the index $u \in \{0, 1, \ldots\}$ to track iteratively generated $\paths k 0$, $\paths k 1$, $\ldots$ at this position. Let's assume now that we have reached position 
$k$ in $D$, having so far produced the set of paths $\paths k 0$; in case we
are at the very beginning of $D$ such that $k = 0$ we assume $\paths 0 0 = \{\emptyseq\}$. We will have been 
careful to only produce $p \in \paths k 0$ such that $\terminalsof{p} \in \prefixlang$. Now, which token should we produce next? For sure, the token to produce next must be a member of the set
\[ \tokensat k = \left\{ x \in \terminals \times \Sigma^* \mid x \in \lexer(\terminalsof{x})(D, k) \right\} \]
because according to $\lexer$ these are the only tokens which start at position $k$ in $D$.
Furthermore, we are only interested in the subset $\admissibletokens k 0$ of those members of $\tokensat k$ which can continue a sequence in 
$\paths k 0$ to a sequence of terminals in $\prefixlang$:
\[ \admissibletokens k 0 = \left\{ x \in \tokensat k\ |\ \exists\, p \in \paths k 0.\ |\charsof{p}| = k\ \wedge\ \terminalsof{p x} \in \prefixlang \right\}. \]
The selector decides whether and how to constrain any remaining ambiguity in the token selection: The set
\[ \selectedtokens k 1 = \selector(\emptyset, \admissibletokens k 0) \]
contains those tokens we use to create the next set of paths $\paths k 1$. 

If $\selectedtokens k 1$ does not contain any empty tokens, then $\paths k 1$ is just
\[\paths k 1  = \appendtokens k\ {\selectedtokens k 1}\, {\paths k 0},\]
where
\begin{align*}
&\appendtokens k\ T\ P = P\ \cup \\ 
&\quad \left\{ p t  \mid p \in P \wedge |\charsof{p}| = k \wedge t \in T \wedge \terminalsof{p t} \in \prefixlang \right\}, 
\end{align*}
and because we only added paths $q$ with $|\charsof{q}| > k$ we can simply proceed to position $k + 1$ via
$\paths {k+1} 0 = \paths {k} 1$. 

But if $\selectedtokens k 1$ does indeed contain empty tokens, then there might be newly added paths $q$ with $|\charsof{q}| = k$, and we might
be able to extend these paths even further. Therefore we define $\paths k 1$ more generally as
\[ \paths k 1  = \limit {(\appendtokens k\ {\selectedtokens k 1})}{\paths k 0}, \]
where 
\[\limit f X = \bigcup\limits_{n = 0}^\infty f^n(X).\]
The fact that now $\paths k 1 \setminus \paths k 0$ may contain paths $q$ with $|\charsof{q}| = k$ means that potentially more tokens in $\tokensat k$ 
become eligible to extends paths which stop at position $k$. We therefore keep repeating the above procedure (potentially infinitely often) until we are sure to have produced all eligible tokens at position $k$ by forming monotone chains 
\[
\begin{array}{ccccccc}
\admissibletokens k 0 &\subseteq& \admissibletokens k 1 &\subseteq& \admissibletokens k 2 &\subseteq& \ldots\\[0.8mm]
\rotatebox{90}{$\subseteq$} & & \rotatebox{90}{$\subseteq$} & & \rotatebox{90}{$\subseteq$} & & \\
\selectedtokens k 0 &\subseteq& \selectedtokens k 1 &\subseteq& \selectedtokens k 2 &\subseteq& \ldots\\[1mm]
\paths k 0 &\subseteq& \paths k 1 &\subseteq& \paths k 2 &\subseteq& \ldots
\end{array}
\]
where for $u \in \{0, 1, 2, \ldots\}$ we recursively define
\[
\begin{array}{rcl}
\admissibletokens k u &=& \left\{ x \in \tokensat k\ |\ \exists\, p \in \paths k u.\ |\charsof{p}| = k\ \wedge\ \terminalsof{p x} \in \prefixlang \right\}\\[2mm]
\selectedtokens k {u+1} &=& \selector(\selectedtokens k u, \admissibletokens k {u + 1})\\[2mm]
\paths k {u+1} &=& \limit {(\appendtokens k\ {\selectedtokens k {u+1}})}{\paths k u}.
\end{array}
\]
We finalize the generation of token sequences at position $k$ by defining 
\[\paths k \infty = \bigcup\limits_{u = 0}^\infty \paths k u \]
and move on to position $k+1$ via $\paths {k+1} 0 = \paths k \infty$. To complete the given set of recursive equations, we define $\selectedtokens k 0 = \emptyset$ 
for all positions $k$. 

Once we have arrived at the end of the character sequence $D$, we finish the conversion of $D$ into the set $\locallexing(D)$ of token sequences via defining
$\allpaths = \paths {|D|} \infty$ and
\[ \locallexing(D) = \left\{ p \in \allpaths \mid  |\charsof{p}| = |D| \wedge \terminalsof{p} \in \lang \right\}. \]

Note that the local lexing semantics does not depend on the particular form of the grammar, but only on $\lang$ and $\prefixlang$. If the grammar contains no unproductive nonterminals, i.e. if for all $X \in \nonterminals$ there is $w \in \Sigma^*$ such that $X \derives w$, then $\prefixlang$ is uniquely determined by $\lang$ and thus in this case the local lexing semantics only depends on $\lang$ alone.

Let us also comment on why the selector takes two arguments. It is tempting (and indeed this was the first thing we tried) to let the selector act only on a single argument, as in $\selectedtokens k {u+1} = \selector\ \admissibletokens k {u + 1}$, but this destroys the property that the $\selectedtokens k u$ form a monotone chain, which turned out to be important for the correctness of our algorithm for local lexing (see Section~\ref{sec:proof}). 

\section{Applications of Local Lexing}
\label{sec:applications}

In this section we explore local lexing and its applications through a range of examples.

\begin{example}[Traditional Lexical Specifications]
Turning a traditional lexical specification into the definition of a local lexing is straightforward. Let the traditional specification be defined over an input alphabet $\Sigma$ and given by $n$ pairs \[(r_1, t_1)\ \ldots\ (r_n, t_n).\] 
The $r_i$ are regular expressions over $\Sigma$, none of which match $\emptyseq \in \Sigma^*$, and the set of terminals $\terminals$ consists of $n$ different terminals 
$t_1, \ldots, t_n$.  
For each $t_i \in \terminals$ we then define \[\lexer(t_i)(D, k) = \{(t_i, c)\},\] where $c$ is the longest prefix of $D_k \ldots D_{|D|-1}$ such that 
$r_i$ matches $c$. If $r_i$ matches no prefix of $D_k \ldots D_{|D|-1}$ then we define $\lexer(t_i)(D, k) = \emptyset$. 

We define a strict partial order $\sele$ on the set of tokens which encodes the longest-match and priority rules:
\[ (t_i, c) \sele (t_j, d) \quad \text{iff} \quad |c| < |d| \vee (|c| = |d| \wedge i < j) .\]
This implies that $\selector(\emptyset, X)$ will be either empty or a singleton because $\sele$ is total on all possible $X$.

To endow the local lexing $\locallexing = (\lexer, \selector)$ with the equivalent semantics to traditional lexing, we define the context-free grammar
$G = (\{\nterminal{S}, \nterminal{T}\}, \terminals, \rules, \nterminal{S})$ where 
\begin{displaymath}
\rules = \left\{
\begin{array}{rcl}
\nterminal{S} & \rightarrow & \nterminal{S}\, T,\\
\nterminal{S} & \rightarrow & \emptyseq, \\
\nterminal{T} & \rightarrow & t_1,\\
& \vdots & \\
\nterminal{T} & \rightarrow & t_n
\end{array}
\right\}.
\end{displaymath}
Because of $\prefixlang = \lang = \terminals^*$ this grammar poses no additional constraints on the local lexing process, 
and therefore $\locallexing(D) = \emptyset$ iff traditional lexing of $D$ would lead to an error, and $\locallexing(D) = \{p\}$
iff traditional lexing of $D$ would yield the token sequence $p$. 
\end{example}

\begin{example}[Infinite Set of Token Sequences]\label{ex:lexinfinite}
We change the previous example slightly and allow the $r_i$ to also match the empty character sequence $\emptyseq$. As a concrete example
consider $\Sigma = \{\texttt{a}\}$, $n = 1$ and let $r_1$ be a regular expression which matches any (possibly empty) sequence of \texttt{a}'s.
Then using the same grammar $G$ as in the previous example, we obtain for example
\[ \locallexing(\texttt{aa}) = \left\{ \lextoken{aa}{$t_1$}, \lextoken{aa}{$t_1$} \lextoken{$\emptyseq$}{$t_1$},
 \lextoken{aa}{$t_1$} \lextoken{$\emptyseq$}{$t_1$} \lextoken{$\emptyseq$}{$t_1$}, 
 \lextoken{aa}{$t_1$} \lextoken{$\emptyseq$}{$t_1$} \lextoken{$\emptyseq$}{$t_1$} \lextoken{$\emptyseq$}{$t_1$}, 
 \ldots\right\},\]
which is an infinite set.
\end{example}

Because local lexing intertwines lexing and parsing, determining for a character sequence $D \in \Sigma^*$ all possible
token sequences $\locallexing(D)$ involves finding valid parses of $\terminalsof{p}$ for all $p \in \locallexing(D)$. The following examples
demonstrate this interplay between lexing and parsing and the role of the selector $\selector$ in it.

\begin{example}\label{ex:H}
Consider the grammar $H = (\nonterminals, \terminals, \rules, \startsymbol)$ where 
$\nonterminals = \{ \nterminal{S},\ \nterminal{A},\ \nterminal{E} \}$, 
$\terminals = \{ \terminal{plus},\ \terminal{minus},\ \terminal{id},\ \terminal{symbol} \}$,
$\rules = \{\nterminal{S} \rightarrow \nterminal{S}\ \terminal{plus}\ \nterminal{A},\ 
\nterminal{S} \rightarrow \nterminal{S}\ \terminal{minus}\ \nterminal{A},\ 
\nterminal{S} \rightarrow \nterminal{A},\ 
\nterminal{A} \rightarrow \nterminal{A}\ \nterminal{E},\ 
\nterminal{A} \rightarrow \nterminal{E},\ 
\nterminal{E} \rightarrow \terminal{id},\  
\nterminal{E} \rightarrow \terminal{symbol} \}$,  
$\startsymbol = \nterminal{S}$. 
The input characters are $\Sigma = \{\texttt{+}, \texttt{-}, \texttt{a}, \texttt{b}, \texttt{c}\}$.
As previously, we specify $\lexer$ simply by associating the terminals with regular expressions: The terminal \terminal{plus} 
recognizes the character \texttt{+}, \terminal{minus} recognizes the character \texttt{-}, \terminal{id} recognizes any nonempty sequence of letters and
\terminal{symbol} recognizes any nonempty sequence of letters and hyphens \texttt{-}. Let $D$ denote the character sequence \texttt{a-b+c}.

We choose the simplest option for $\sele$ and define \[\sele\ =\ \emptyset,\] i.e. no token has a higher priority than another token. This implies 
$\selector(A, B) = B$ for all token sets $A \subseteq B$. The result of determining $\locallexing(D)$ is shown in Figure~\ref{fig:lexH}. 
Because \terminal{symbol} overlaps with both \terminal{identifier} and \terminal{minus}, $D$ can be interpreted in eight different ways as a token sequence. Because of the longest-match rule for regular expressions, 
\[\lextoken{a}{symbol}\ \lextoken{-}{minus}\ \lextoken{b}{id}\ \lextoken{+}{plus}\ \lextoken{c}{id}\ \notin \locallexing(D),\]
excluded together with several other token sequences which would otherwise qualify.

For all $p \in \locallexing(D)$ the terminal sequence $\terminalsof{p}$ yields a valid parse tree, i.e. there
are 8 different ways to parse $D$ although $H$ by itself is an unambiguous grammar; 
all of them are jointly depicted in the parse graph in Figure~\ref{fig:parseH}. 
Ambiguities are depicted as dashed lines in the graph: whenever there are multiple ways to proceed with the derivation of a nonterminal, each alternative is connected with the nonterminal by a dashed line.
\end{example}

\begin{figure}
\[
\locallexing(\texttt{a-b+c}) = {\small\left\{ 
\begin{tabular}{l}
\lextoken{a}{id}\ \lextoken{-}{minus}\ \lextoken{b}{id}\ \lextoken{+}{plus}\ \lextoken{c}{id}\ , \\[0.3cm]
\lextoken{a}{id}\ \lextoken{-}{minus}\ \lextoken{b}{id}\ \lextoken{+}{plus}\ \lextoken{c}{symbol}\ ,\\[0.3cm]
\lextoken{a}{id}\ \lextoken{-}{minus}\ \lextoken{b}{symbol}\ \lextoken{+}{plus}\ \lextoken{c}{id}\ , \\[0.3cm]
\lextoken{a}{id}\ \lextoken{-}{minus}\ \lextoken{b}{symbol}\ \lextoken{+}{plus}\ \lextoken{c}{symbol}\ , \\[0.3cm]
\lextoken{a}{id}\ \lextoken{-b}{symbol}\ \lextoken{+}{plus}\ \lextoken{c}{id}\ , \\[0.3cm]
\lextoken{a}{id}\ \lextoken{-b}{symbol}\ \lextoken{+}{plus}\ \lextoken{c}{symbol}\ , \\[0.3cm]
\lextoken{a-b}{symbol}\ \lextoken{+}{plus}\ \lextoken{c}{id}\ , \\[0.3cm]
\lextoken{a-b}{symbol}\ \lextoken{+}{plus}\ \lextoken{c}{symbol}
\end{tabular}
\right\}}
\] 
\caption{Lexing \texttt{a-b+c} in Example~\ref{ex:H}}
\label{fig:lexH}
\end{figure}

\begin{figure}
\begin{center}
\includegraphics[scale=0.5]{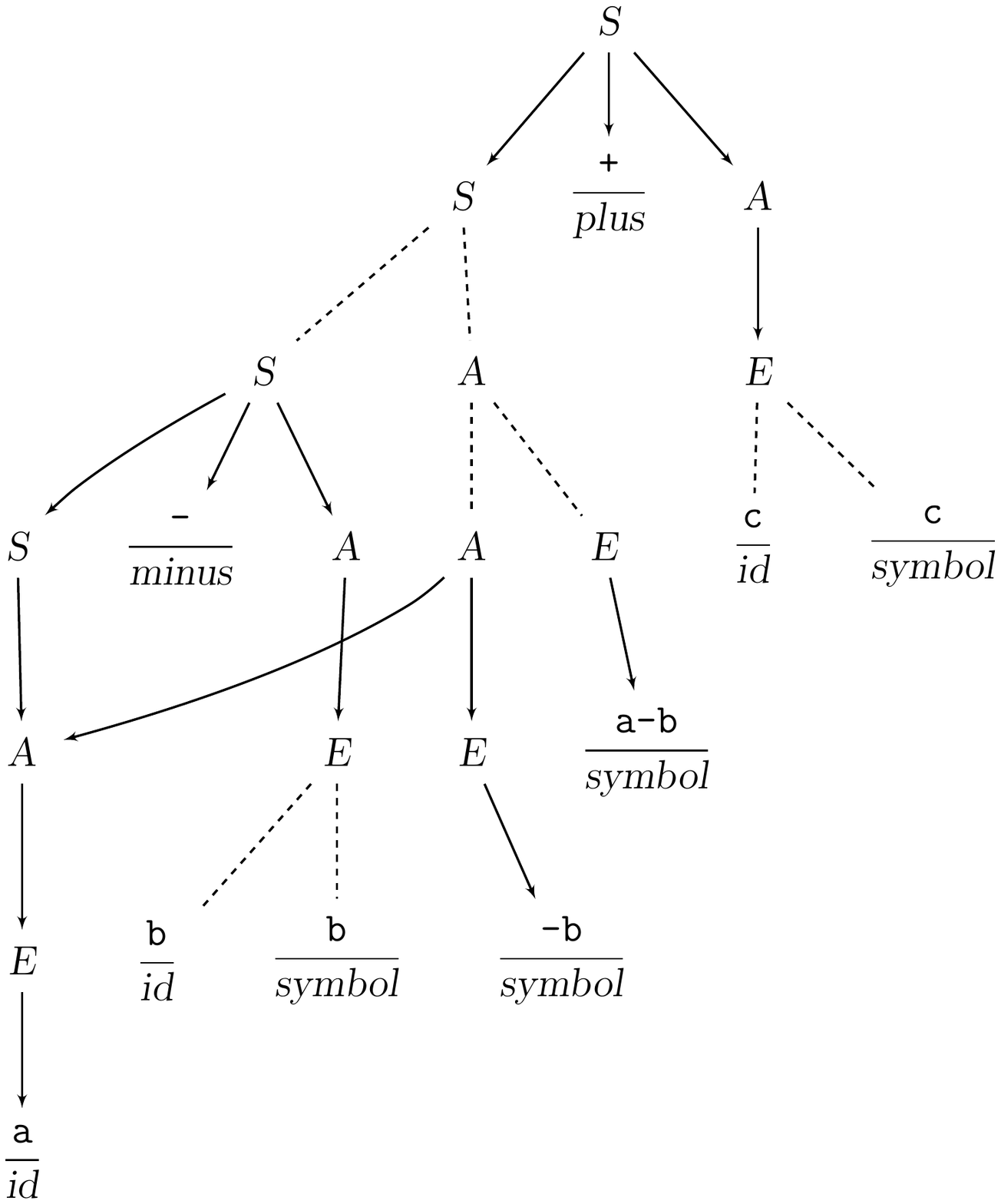}
\end{center}
\caption{Parsing \texttt{a-b+c} in Example~\ref{ex:H}}
\label{fig:parseH}
\end{figure}

\begin{example}\label{ex:H1}
We use the same grammer $H$ and the same lexer $\lexer$ as in Example~\ref{ex:H}, but choose a selector $\selector$ such that $\terminal{symbol}$
has lower priority than $\terminal{id}$ and $\terminal{minus}$:
\[ x \sele y \quad \text{iff} \quad [x] = \terminal{symbol} \wedge [y] \in \{\terminal{id}, \terminal{minus}\}. \]
This results in a unique lexing of \texttt{a-b+c},
\[\locallexing(\texttt{a-b+c}) = \left\{ \lextoken{a}{id}\ \lextoken{-}{minus}\ \lextoken{b}{id}\ \lextoken{+}{plus}\ \lextoken{c}{id} \right\},\]
and thus also in a unique parsing as shown in Figure~\ref{fig:H1}.
\end{example}
\begin{figure}
\begin{center}
\includegraphics[scale=0.5]{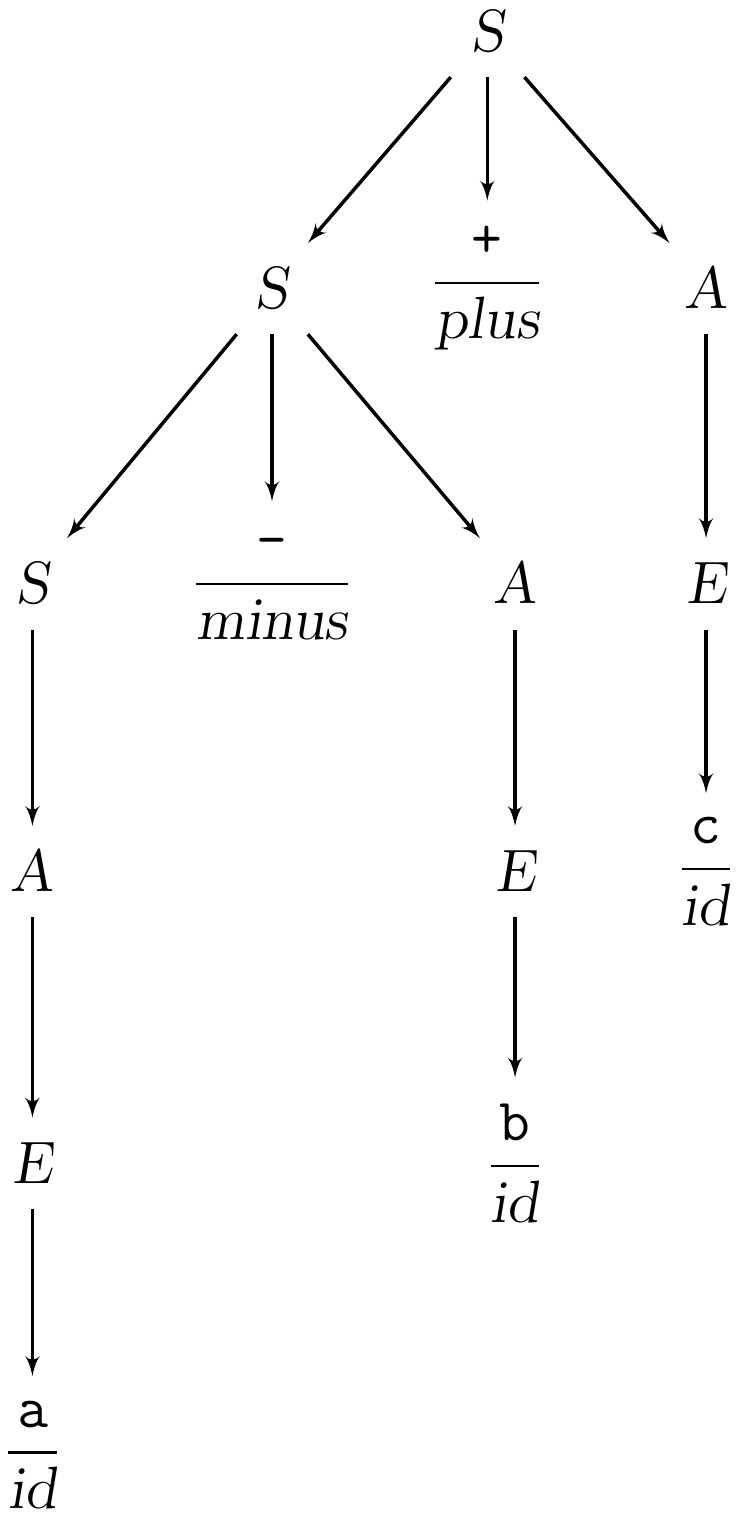}
\end{center}
\caption{Parsing \texttt{a-b+c} in Example~\ref{ex:H1}}
\label{fig:H1}
\end{figure}

\begin{example}\label{ex:H2}
We again leave $H$ and $\lexer$ fixed. This time we define $\sele$ such that longer tokens have higher priority:
\[ x \sele y \quad \text{iff} \quad |x| < |y|. \]
This yields two possible lexings for $\texttt{a-b+c}$,
\[ \locallexing(\texttt{a-b+c}) = 
\small\left\{ 
\begin{tabular}{l}
\lextoken{a-b}{symbol}\ \lextoken{+}{plus}\ \lextoken{c}{id}\ , \\[0.3cm]
\lextoken{a-b}{symbol}\ \lextoken{+}{plus}\ \lextoken{c}{symbol}
\end{tabular}
\right\},
\]
and leads to the ambiguous parsing shown on the left hand side of Figure~\ref{fig:H23}.
\end{example}

\begin{example}\label{ex:H3}
We choose to modify the previous example such that its remaining ambiguity is resolved in favour of $\terminal{id}$ by 
defining $x \sele y$ iff 
\[ |x| < |y| \vee (|x| = |y| \wedge [x] = \terminal{symbol} \wedge [y] = \terminal{id}). \]
This leads to a unique lexing for $\texttt{a-b+c}$,
\[ \locallexing(\texttt{a-b+c}) = 
\small\left\{ 
\lextoken{a-b}{symbol}\ \lextoken{+}{plus}\ \lextoken{c}{id}
\right\},
\]
and yields the unique parsing depicted on the right hand side of Figure~\ref{fig:H23}.
\end{example}

\begin{figure}
\begin{center}
\includegraphics[scale=0.5]{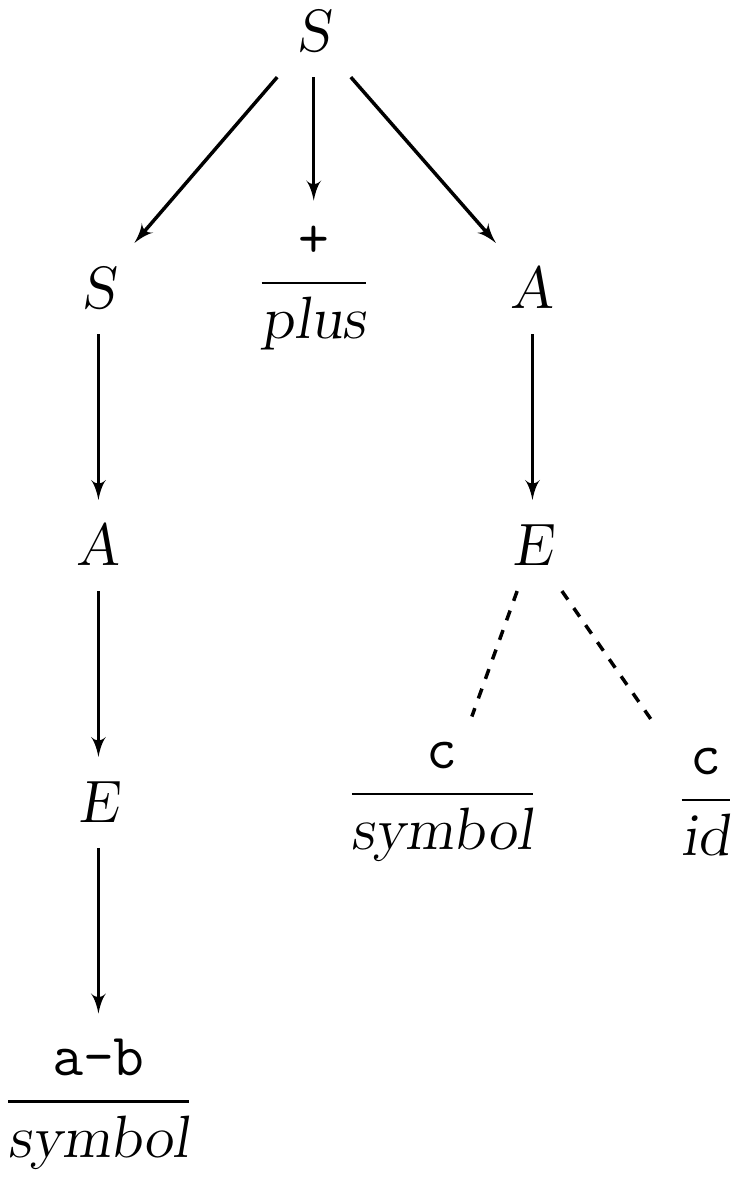}
\includegraphics[scale=0.5]{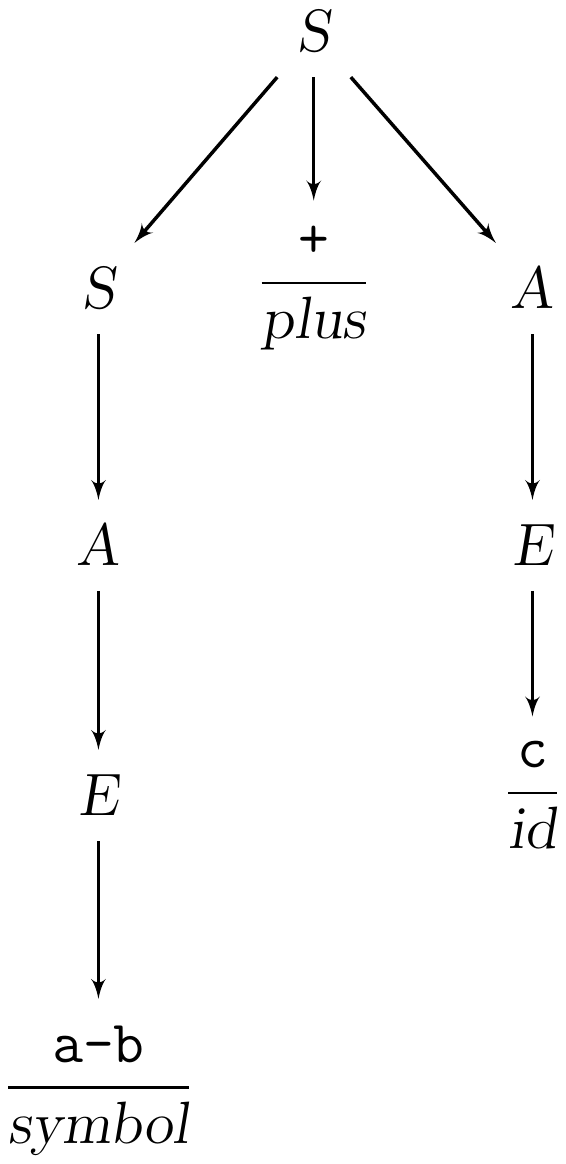}
\end{center}
\caption{Parsing \texttt{a-b+c} in Ex.~\ref{ex:H2} (left) and Ex.~\ref{ex:H3} (right)}
\label{fig:H23}
\end{figure}

\begin{example}[The Lexer Hack]\label{ex:lexerhack}
Consider the expression \texttt{(a)*b} given in the programming language C.
The meaning of this expression depends on whether \texttt{a} is a type identifier or a variable identifier. If it is a type identifier, then the expression is to be interpreted as a type cast, otherwise as a multiplication. Because type identifiers cannot be distinguished from variable identifiers by their look, in traditional parsing a problem arises, because the lexer phase is supposed to happen \emph{before} the parsing phase and any semantic analysis, but the terminal type of \texttt{a} really depends on information from the semantic analysis. The traditional solution, to manually direct feedback from the semantic analysis back into the lexer, is known as \emph{the lexer hack}~\cite{lexerhack}. With local lexing, the lexer / parser can offer \emph{both} alternatives, and let later stages of the analysis pick the right one, thus decoupling parsing and semantic analysis. The grammar $C$ demonstrates this, where 
$\terminals = \{\terminal{typeid},\terminal{id},\terminal{asterisk},\terminal{left},\terminal{right}\}$ and 
\begin{align*}
\rules = \left\{
\begin{array}{rcl}
\nterminal{Expr} & \rightarrow & \nterminal{Mul}, \\
\nterminal{Expr} & \rightarrow & \nterminal{Cast}, \\
\nterminal{Expr} & \rightarrow & \nterminal{Deref}, \\
\nterminal{Expr} & \rightarrow & \terminal{id}, \\
\nterminal{Expr} & \rightarrow & \terminal{left}\ \nterminal{Expr}\ \terminal{right}, \\
\nterminal{Mul} & \rightarrow & \nterminal{Expr}\ \terminal{asterisk}\ \nterminal{Expr}, \\
\nterminal{Cast} & \rightarrow & \terminal{left}\ \nterminal{Type}\ \terminal{right}\ \terminal{Expr}, \\
\nterminal{Deref} & \rightarrow & \terminal{asterisk}\ \nterminal{Expr}, \\
\nterminal{Type} & \rightarrow & \terminal{typeid}
\end{array}
\right\}.
\end{align*}
The lexer $\lexer$ is chosen in the obvious way, together with an empty $\sele$.
The resulting ambiguous parse graph for \texttt{(a)*b} is shown in Figure~\ref{fig:lexerhack}, and
\[ \locallexing(\texttt{(a)*b}) = 
\small\left\{ 
\begin{tabular}{l}
\lextoken{(}{left}\ \lextoken{a}{id}\ \lextoken{)}{right}\ \lextoken{*}{asterisk}\ \lextoken{b}{id}\ , \\[0.3cm]
\lextoken{(}{left}\ \lextoken{a}{typeid}\ \lextoken{)}{right}\ \lextoken{*}{asterisk}\ \lextoken{b}{id}\  \\[0.3cm]
\end{tabular}
\right\}.
\]
\end{example}
\begin{figure}
\begin{center}
\includegraphics[scale=0.5]{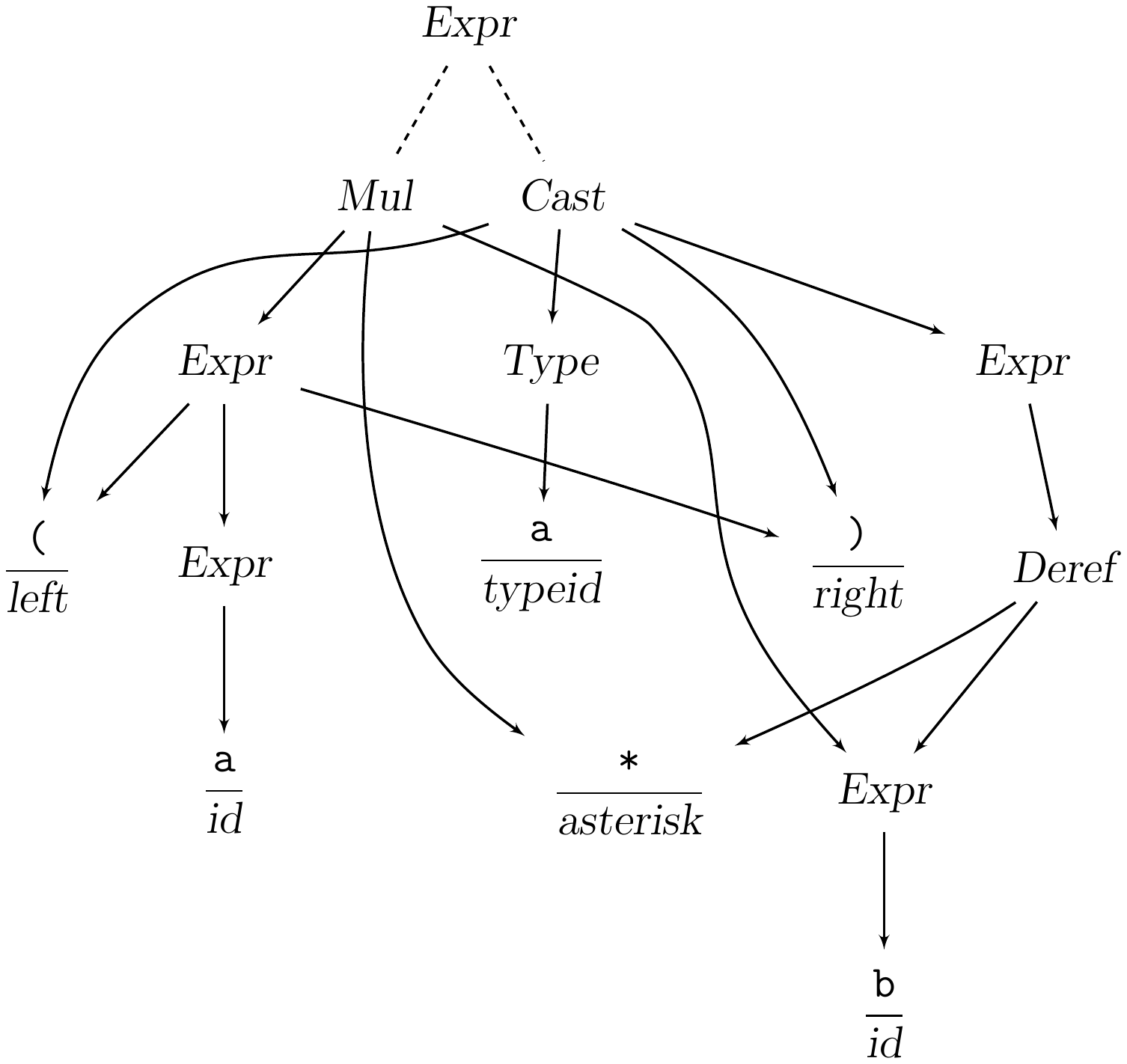}
\end{center}
\caption{Parsing \texttt{(a)*b} in $C$}
\label{fig:lexerhack}
\end{figure}

\paragraph{Schr\"odinger's Token}
Example~\ref{ex:lexerhack} is a special case of a scenario where the conversion from character sequences to token sequences is ambiguous, but where the token \emph{boundaries} are always the same in all alternative token sequences. The tokens appearing in such a scenario have been christened Schr\"odinger's tokens~\cite{schroedingerstoken}. Local lexing is more powerful than the Schr\"odinger's token approach and (at least conceptually) subsumes it.  

\paragraph{Ruby Slippers}
The Marpa parser~\cite{marpa} is an Earley-based parsing library which advocates the use of a technique called \emph{Ruby Slippers}~\cite{rubyslippers}.
This technique takes advantage of the fact that the Earley parser "knows" which tokens it expects at any given stage of the parse progress. Marpa has an interface through which the parser can communicate with the scanner to negotiate which token to scan next, thus allowing for sophisticated error handling.

Ruby Slippers and local lexing are both children of the same insight, namely that the scanning and parsing stages should communicate because the parser has information about which tokens it expects next. Local lexing though takes this insight to a new level which in principle is independent from a particular parsing algorithm like Earley. In this sense, local lexing can be seen as providing a rigorous semantics for certain uses of the Ruby Slippers technique. 

The next example demonstrates how local lexing can be used to specify lightweight error recovery as part of the language design.
\begin{example}[Error Recovery]\label{ex:errorrecovery} 
Consider a grammar for simple arithmetic expressions, where the nonterminals are given by
$\{ \nterminal{Expr},\ \nterminal{Sum},\ \nterminal{Mul},\ \nterminal{Atom} \}$, the terminals by 
$\{ \terminal{plus},\ \terminal{mul},\ \terminal{id},\ \terminal{num}, \ \terminal{left}, \ \terminal{right} \}$, and the rules by
\begin{align*}
\begin{array}{rcl}
\nterminal{Expr} & \rightarrow &\nterminal{Sum},\\ 
\nterminal{Sum} & \rightarrow & \nterminal{Sum}\ \terminal{plus}\ \nterminal{Mul},\\ 
\nterminal{Sum} & \rightarrow & \nterminal{Mul},\\ 
\nterminal{Mul} & \rightarrow & \nterminal{Mul}\ \terminal{mul}\ \nterminal{Atom},\\ 
\nterminal{Mul} & \rightarrow & \nterminal{Atom},\\ 
\nterminal{Atom} & \rightarrow & \terminal{left}\ \terminal{Sum}\ \terminal{right},\\  
\nterminal{Atom} & \rightarrow & \terminal{id},\\  
\nterminal{Atom} & \rightarrow & \terminal{num}.
\end{array}
\end{align*}
The input characters are $\Sigma = \{\texttt{+}, \texttt{*}, \texttt{(}, \texttt{)}, \texttt{0} \dots \texttt{9}, \texttt{a} \dots \texttt{z}\}$,
and $\lexer$ is specified such that \terminal{plus} recognizes the character \texttt{+}, \terminal{mul} recognizes the character \texttt{*}, 
\terminal{left} recognizes the opening bracket \texttt{(}, \terminal{right} recognizes the closing bracket \texttt{)}, \terminal{id} recognizes any nonempty sequence of letters and digits starting with a letter, and \terminal{num} recognizes any nonempty sequence of digits. Consider now the following string $w$ which is invalid with respect to the language we just specified:
\[\texttt{2(a*+))+(1}\] 
Instead of simply diagnosing that there is some parse error at position $k=1$, we would like to recover some of the structure of $w$, for example for providing better error messages or for a rich interactive editing experience. To simplify this task, we first make our grammar more permissive by adding the rule
$\nterminal{Mul}\ \rightarrow\ \nterminal{Mul}\ \nterminal{Atom}$.
This has the effect that a string like \texttt{2x} becomes legal, representing the multiplication of \texttt{2} and \texttt{x}. Such notation is common mathematical practice and thus seems like a justifiable design choice. We then introduce three new terminals: 
\terminal{e-atom}\ and \terminal{e-right}\ both recognize the empty string $\varepsilon$ only, and \terminal{e-superfluous}\ recognizes the closing bracket \texttt{)}.
We incorporate these new terminals into the grammar by adding the following rules:
\begin{align*}
\begin{array}{rcl}
\nterminal{Mul} & \rightarrow & \nterminal{Mul}\ \terminal{e-superfluous},\\ 
\nterminal{Atom} & \rightarrow & \terminal{left}\ \terminal{Sum}\ \terminal{e-right},\\  
\nterminal{Atom} & \rightarrow & \terminal{e-atom}  
\end{array}
\end{align*}
Finally, we choose the selector $\selector$ such that the three added error terminals have \emph{lower priority than all other terminals}.
Paths in $\locallexing(D)$ that contain error terminals will only be considered by us if $\locallexing(D)$ contains no paths without error terminals. Figure~\ref{fig:errorrecovery} shows the result of applying the updated grammar to $w$. The corresponding path is 
\begin{multline*}
\lextoken{2}{num}\ \lextoken{(}{left}\ \lextoken{a}{id}\ \lextoken{*}{mul}\ \lextoken{$\varepsilon$}{e-atom}\ \lextoken{+}{plus}\ \lextoken{$\varepsilon$}{e-atom}\ \lextoken{)}{right}\\ \lextoken{)}{e-superfluous}\ \lextoken{+}{plus}\ 
\lextoken{(}{left}\ \lextoken{1}{num}\ \lextoken{$\varepsilon$}{e-right}.
\end{multline*}
Testing convinces us that the updated grammar can indeed successfully parse all $D \in \Sigma^*$, and does so unambiguously.

\begin{figure}
\begin{center}
\includegraphics{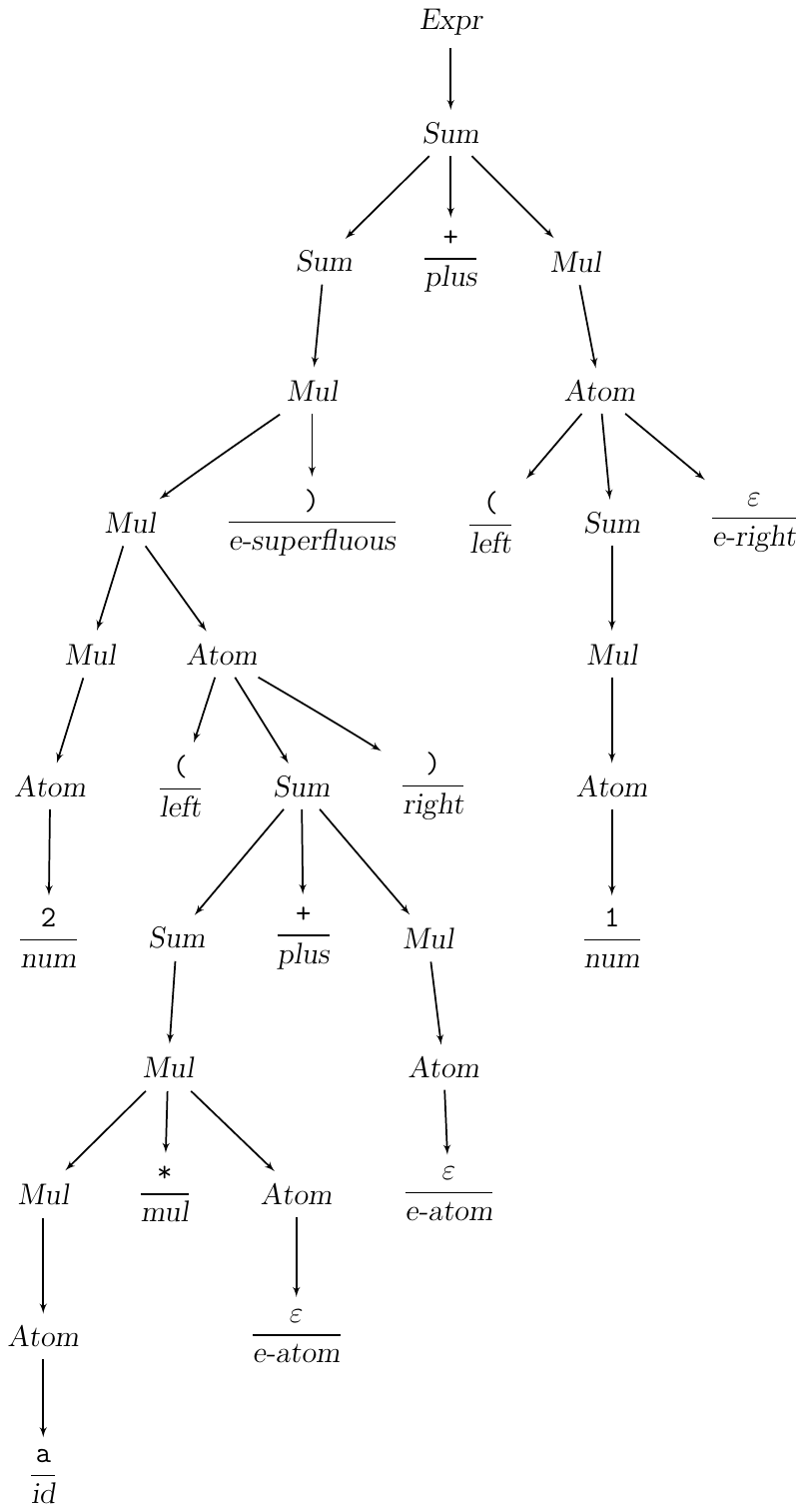}
\end{center}
\caption{Parsing \texttt{2(a*+))+(1} in Example~\ref{ex:errorrecovery}}
\label{fig:errorrecovery}
\end{figure}

\end{example}

\paragraph{Whitespace}
In our examples we have avoided to make use of whitespace. In traditional parsing there are essentially two different ways of dealing with whitespace:
\begin{enumerate}
\item One approach is to explicitly incorporate whitespace terminals into the grammar rules. This can become cumbersome and error-prone if done manually, because usually whitespace can legally appear almost everywhere.
\item The other approach is to handle whitespace at the lexical stage exclusively. 
\end{enumerate}
The first approach applies to local lexing as well. The second approach does not directly apply, as there is no single lexical stage anymore with local lexing. 
Nevertheless, it seems that local lexing allows to combine the convenience of the second approach with the fine-grained control of the first one by using extended attribute grammars to specify layout constraints, making it possible to tackle layout-sensitive languages. The combination of local lexing with layout-sensitivity is work in progress and beyond the scope of this paper.

\section{Implementing Local Lexing}
\label{sec:algorithm}

\newcommand{\sigmalang}{\ensuremath{\lang_\Sigma}}
\newcommand{\bin}{\ensuremath{\textsl{Bin}}}
\newcommand{\items}{\ensuremath{\mathcal{I}}}
\newcommand{\validitems}{\ensuremath{\mathcal{V}}}
\newcommand{\itemdot}{\ensuremath{\makebox[7pt]{\textbullet}}}
\newcommand{\someitems}{\ensuremath{I}}
\newcommand{\generateditems}[1]{\ensuremath{\langle{#1}\rangle}}
\newcommand{\itemsi}[1]{\ensuremath{\mathcal{I}_{#1}}}
\newcommand{\itemsj}[2]{\ensuremath{\mathcal{J}_{#1}^{#2}}}
\newcommand{\ctokens}[2]{\ensuremath{\mathcal{T}_{#1}^{#2}}}

Given a grammar $G$ and a local lexing $\locallexing$, let us define the \emph{character language} $\sigmalang$ of $G$ and $\locallexing$ by
\[ \sigmalang = \left\{ D \in \Sigma^* \mid \locallexing(D) \neq \emptyset \right\}. \]
How do we build a recognizer for $\sigmalang$?

\paragraph{Lexing Driving Parsing}
Our first attempt might be to directly apply the semantics of local lexing after 
having picked a parsing algorithm which is capable of recognizing both the prefixes $\prefixlang$ and the language $\lang$ of $G$. 
Most of the popular parsing algorithms would be suitable for this, such as LL, LR or Earley parsing.

While this approach will work in many cases, it is inefficient to treat the parsing algorithm as a black box which is repeatedly asked whether
a given sequence of terminals is in $\prefixlang$ or not. More importantly, as Example~\ref{ex:lexinfinite} shows there are finite grammars which nevertheless produce
infinite sets of token sequences and for which this approach would therefore fail by getting stuck in a non-terminating path generating loop.  

\paragraph{Parsing Driving Lexing} 
A better approach seems to be to inverse above approach and to let the parsing progress drive the lexing process. Often it will be possible to predict from the internal parser state which terminal is expected next. For this to work, we need to modify
the parsing algorithm so that it not only knows about terminals $\terminals$, but also about characters $\Sigma$.
The Earley algorithm seems to be best suited to be adapted to such a purpose, as it works on all context-free grammars, copes gracefully with ambiguity, and is easily extended with top-down, left-right, and bottom-up parametricity. This is why Earley-based parsing is our main focus here. Nevertheless, studying how to modify other parsing algorithms for local lexing is interesting and of potentially great practical interest as well; experiments indicate that in particular the LR(1) parsing algorithm can be modified to facilitate local lexing in a natural and simple way. 

\paragraph{Earley's Algorithm} We first describe (a high-level version of) Earley's original algorithm, assuming $\Sigma = \terminals$. 
To recognize an input $D \in \Sigma^*$ as belonging to $\lang$, it computes \emph{items}; an item is a quadruple $(r, d, i, j)$ where 
$r = (N \rightarrow \alpha\beta) \in \rules$ is a rule of the grammar, $d = |\alpha|$ is a position within that rule demarking the current parsing progress, $i \in \{0, \ldots, |D|\}$ is the origin of the item, and $j \in \{i, \ldots, |D|\}$ is the bin of the item. 
An alternative way to write the item is as \[(N \rightarrow \alpha\itemdot\beta, i, j).\] 

Earley's algorithm builds a monotone chain of item sets
\[ \itemsi 0 \subseteq \itemsi 1 \subseteq \itemsi 2 \subseteq \ldots \subseteq \itemsi {|D|} = \allitems. \]
To this end, we define an initial item set $\operatorname{Init}$ and 
monotone operators $\operatorname{Predict}$, $\operatorname{Complete}$ and $\operatorname{Scan}$ which all take a position 
$k \in \{0, \ldots, |D|\}$ and an item set $\someitems$ and return an augmented item set (Figure~\ref{fig:earleyoperators}).  
\begin{figure}
\begin{align*}
\operatorname{Init} & = \{ (\startsymbol \rightarrow \itemdot \alpha, 0, 0) \mid \startsymbol \rightarrow \alpha \in \rules\} \\[0.2cm]
\operatorname{Predict}\ & k\ \someitems = \someitems\ \cup\\
 \{ & (M \rightarrow \itemdot \gamma, k, k) \mid \exists\,N\ \alpha\ \beta\  i.\\
& (N \rightarrow \alpha \itemdot M \beta, i, k) \in \someitems \wedge (M \rightarrow \gamma) \in \rules \}\\[0.2cm]
\operatorname{Complete}\ &  k\ \someitems = \someitems\ \cup\\
 \{ & (N \rightarrow \alpha M \itemdot \beta, i, k) \mid \exists\,j\ \gamma.\\
    & (N \rightarrow \alpha \itemdot M \beta, i, j) \in \someitems \wedge (M \rightarrow \gamma \itemdot, j, k) \in \someitems \} \\[0.2cm]
\operatorname{Scan}\ &  k\ \someitems = \someitems\ \cup\\
 \{ & (N \rightarrow \alpha X \itemdot \beta, i, k + 1) \mid k < |D| \wedge X = D_k\ \wedge  \\
     &(N \rightarrow \alpha \itemdot X \beta, i, k) \in \someitems \}
\end{align*}
\caption{Building Blocks of Earley's Algorithm}
\label{fig:earleyoperators}
\end{figure}
We then define for $k \in \{0, \ldots, |D|\}$ the operator $\pi_k$ by
\[ \pi_k\,\someitems = \limit {(\operatorname{Scan} k \circ \operatorname{Complete} k \circ \operatorname{Predict} k)}{\someitems},\]
reusing the $\operatorname{limit}$ operator introduced in Section~\ref{sec:lldef}, and use $\pi_k$ to recursively define the sets $\itemsi k$ by
\begin{align*} 
\itemsi 0 & = \pi_0\ \operatorname{Init},  \\
\itemsi k & = \pi_k\ \itemsi {k - 1}  \ \text{for $k > 0$}.
\end{align*}

\begin{theorem}[Correctness of Earley's Algorithm]\label{th:earleycorrectness}
Earley's algorithm is both sound and complete, i.e.
\[\startsymbol \derives D \quad\text{iff}\quad \exists\, \alpha.\ (\startsymbol \rightarrow \alpha \itemdot, 0, |D|) \in \allitems.\] 
\end{theorem}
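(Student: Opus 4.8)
The plan is to prove soundness and completeness separately, in each case by establishing a stronger invariant about \emph{all} items in $\allitems$, not just the final item $(\startsymbol \rightarrow \alpha\itemdot, 0, |D|)$. The natural invariant is: an item $(N \rightarrow \alpha\itemdot\beta, i, j)$ is \emph{valid} for $D$ iff $\startsymbol \derives D_0\cdots D_{i-1}\, N\,\gamma$ for some $\gamma$ (i.e.\ the prefix up to $i$ together with $N$ is reachable from $\startsymbol$) and $\alpha \derives D_i\cdots D_{j-1}$. I would first show soundness, i.e.\ every item in $\allitems$ is valid, and then completeness, i.e.\ every valid item lies in $\allitems$; the theorem then follows by specialising to items of the form $(\startsymbol \rightarrow \alpha\itemdot, 0, |D|)$, using that $\startsymbol \derives D$ iff $\startsymbol \rightarrow \alpha \in \rules$ and $\alpha \derives D$ for some $\alpha$.

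For soundness, I would argue that validity is preserved by each of the four building blocks. Since $\itemsi k$ is built from $\operatorname{Init}$ by iterating $\operatorname{Scan} k \circ \operatorname{Complete} k \circ \operatorname{Predict} k$ and taking a $\operatorname{limit}$, and since $\operatorname{limit}$ is just a countable union of finite iterates, it suffices to check that (a) every item in $\operatorname{Init}$ is valid, and (b) if $\someitems$ consists of valid items then so do $\operatorname{Predict}\,k\,\someitems$, $\operatorname{Complete}\,k\,\someitems$ and $\operatorname{Scan}\,k\,\someitems$. Each of these is a short derivation-theoretic calculation: $\operatorname{Init}$ uses reflexivity of $\derives$; $\operatorname{Predict}$ adds an item with empty $\alpha$ part (so $\emptyseq \derives \emptyseq$) whose reachability comes from the triggering item; $\operatorname{Complete}$ concatenates two derivations $\alpha \derives D_i\cdots D_{j-1}$ and $\gamma \derives D_j\cdots D_{k-1}$ together with the rule $M \rightarrow \gamma$ to get $\alpha M \derives D_i\cdots D_{k-1}$; and $\operatorname{Scan}$ appends the single terminal $X = D_k$. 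One also needs that the bins only ever reach index $k$ and origins stay $\le$ bins, which is immediate from the shapes of the operators and an induction on $k$ showing $\itemsi k$ only contains items with bin $\le k$.

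For completeness, the cleaner route is induction on the length of a derivation witnessing $\alpha \derives D_i\cdots D_{j-1}$, combined with an outer induction on $j$ (and, at fixed $j$, on the structure needed to reach the relevant $\operatorname{limit}$ fixpoint). Given a valid item, one shows it is eventually produced: if $\alpha = \emptyseq$ the item is put in by $\operatorname{Init}$ or $\operatorname{Predict}$ (using the reachability half of validity, itself justified by induction on the derivation $\startsymbol \derives D_0\cdots D_{i-1} N\gamma$); if $\alpha = \alpha' X$ with $X$ a terminal, it comes from $\operatorname{Scan}$ applied to the item $(N \rightarrow \alpha'\itemdot X\beta, i, j-1)$, which is valid with a shorter derivation and a smaller bin; if $\alpha = \alpha' M$ with $M$ a nonterminal, the derivation $\alpha' M \derives D_i\cdots D_{j-1}$ splits as $\alpha' \derives D_i\cdots D_{m-1}$ and $M \derives D_m\cdots D_{j-1}$, and the item is produced by $\operatorname{Complete}$ from $(N\rightarrow\alpha'\itemdot M\beta, i, m)$ and $(M \rightarrow \gamma\itemdot, m, j)$, both valid with strictly smaller derivation length; the latter in turn needs the $\operatorname{Predict}$-generated seed $(M \rightarrow \itemdot\gamma, m, m)$, whose reachability follows because $(N\rightarrow\alpha'\itemdot M\beta, i, m)$ is present.

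The main obstacle is the bookkeeping around the $\operatorname{limit}$ operator in the completeness direction: because $\operatorname{Predict}$, $\operatorname{Complete}$ and $\operatorname{Scan}$ feed back into one another within a single bin $\itemsi j$ (and because nullable nonterminals let $\operatorname{Complete}$ fire with $m = j$, creating cycles at the same position), one must be careful that the well-founded measure used for the induction genuinely decreases — the derivation length does the job, but only if the decomposition steps are chosen so that each recursive appeal is to a \emph{strictly} shorter derivation, and one must separately verify that the finitely many iterates of the inner operator composition suffice to close the item set (i.e.\ that the needed fixpoint is reached). Formalising exactly this interleaving is presumably where the bulk of the $230$ pages of Isabelle proof goes; at the level of this proof sketch I would isolate it as a lemma stating that $\pi_k\,\someitems$ contains every item derivable from $\someitems$ by the obvious one-step validity-preserving moves, and prove that lemma by induction on the derivation measure just described.
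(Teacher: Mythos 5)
Your proposal is correct in outline, but it takes a genuinely different route from the paper. The paper does not prove Theorem~\ref{th:earleycorrectness} directly at all: its entire proof is an instantiation of the general local-lexing correctness result (Theorem~\ref{th:llearleycorrectness}) with the degenerate lexing in which $\Sigma = \terminals$, $\lexer(t)(D,k) = \{(t,t)\}$ exactly when $D_k = t$, and $\sele = \emptyset$, so that tokens coincide with characters, each $D$ has essentially one path, and $\sigmalang$ collapses to $\lang$. You instead give the classical self-contained argument: soundness by showing that $\operatorname{Init}$, $\operatorname{Predict}$, $\operatorname{Complete}$ and $\operatorname{Scan}$ each preserve the absolute validity invariant ($\startsymbol \derives D_0\cdots D_{i-1} N\gamma$ and $\alpha \derives D_i \cdots D_{j-1}$), and completeness by induction on derivation length with an outer induction on the bin index. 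Notably, your invariant is exactly the absolute notion of validity from Jones that the paper cites as the inspiration for its relativized $p$-valid items (Definition~\ref{def:validity}); in the special case $\Sigma = \terminals$ the paper's $p$-validity degenerates to precisely your condition. What the paper's route buys is economy and consistency --- Theorem~\ref{th:llearleycorrectness} has to be proved anyway, so the classical theorem comes for free and is guaranteed to agree with the generalization; what your route buys is a direct, standard proof that does not depend on the much heavier local-lexing machinery. Your flagged obstacle (the interaction of nullable nonterminals with the $\operatorname{limit}$ fixpoint in the completeness induction) is real and is exactly the part that requires the most care; your sketch identifies the right measure and the right auxiliary lemma, so I consider the approach sound, though fully discharging that lemma is where the substantive work lies.
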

\begin{proof}
This is covered by Theorem~\ref{th:llearleycorrectness} for the special case 
\begin{align*}
\Sigma & = \terminals, \\
\lexer(t)(D, k) & = 
  \begin{cases} \{(t, t)\} & \text{for $k < |D| \wedge D_k = t$} \\ \emptyset & \text{otherwise} \end{cases}, \\ 
\sele &= \emptyset.
\end{align*}
\end{proof}

\paragraph{Earley's Algorithm with Local Lexing}
We now assume that we have a local lexing $\locallexing$ and thus $\Sigma$ and $\terminals$ do not necessarily coincide anymore.
We leave $\operatorname{Init}$, $\operatorname{Predict}$ and $\operatorname{Complete}$ unchanged, but we need an updated $\operatorname{Scan}$ operator
that works on tokens instead of characters, and a new operator $\operatorname{Tokens}$ (Figure~\ref{fig:anewscanner}).
\begin{figure}
\begin{align*}
\operatorname{Tokens}\ &  T\ k\ \someitems = \\
 \selector\ T\ \{ & x \mid \exists\,X\ N\ \alpha\ \beta\ i.\ X \in \terminals\ \wedge\ \\
     & (N \rightarrow \alpha \itemdot X \beta, i, k) \in \someitems\ \wedge x \in \lexer(X)(D, k)\}\\[0.2cm]
\operatorname{Scan}\ &  T\ k\ \someitems = \someitems\ \cup\\
 \{ & (N \rightarrow \alpha X \itemdot \beta, i, k + |c|) \mid (X, c) \in T\ \wedge\\
    & (N \rightarrow \alpha \itemdot X \beta, i, k) \in I \}
\end{align*}
\caption{A New Scanner}
\label{fig:anewscanner}
\end{figure}

The operation $\operatorname{Tokens} T\ k\ \someitems$ first determines all the candidate terminals
that could possibly appear next at position $k$ in $D$ according to $\someitems$. It then determines which of those candidate terminals
can actually be lexed as tokens at that position. It applies the selector $\selector$ to it and returns the resulting set of tokens. Finally,
the $\operatorname{Scan}$ operator is easily adapted to work on tokens instead of single characters. We iteratively compute the sets 
$\ctokens k 0$, $\ctokens k 1$, $\ldots$ of tokens at position $k$, and these act as arguments $T$ to both $\operatorname{Tokens}$ and $\operatorname{Scan}$.
Accordingly, we need to update the definition of $\pi_k$ to take the additional argument $T$ into account:
\[ \pi_k\,T\,\someitems = \limit {(\operatorname{Scan} T\ k \circ \operatorname{Complete} k \circ \operatorname{Predict} k)}{\someitems}.\]
Furthermore, it is now possible that scanning at position $k$ might add new items to bin $k$ due to the existence of empty tokens, 
therefore enlarging the set of eligible terminals at position $k$. To cope with this we keep applying the operator $\pi_k$ with updated token sets until it converges:
\begin{align*}
\itemsj 0 0 & = \pi_0\ \emptyset\ \operatorname{Init}\\
\itemsj k {u+1} & = \pi_k\ \ctokens k {u+1}\ \itemsj k u\\
\itemsi k & = \bigcup\limits_{u=0}^\infty \itemsj k u\\
\itemsj {k+1} 0 & = \pi_{k+1}\ \emptyset{}\ \itemsi k\\
\ctokens k 0 & = \emptyset\\
\ctokens {k} {u+1} & = \operatorname{Tokens}\ \ctokens k u \  k\ \itemsj k u.
\end{align*}
Note that in case of $\itemsj k {u+1} = \itemsj k u$ we have $\itemsi k = \itemsj k u$, thus the computation of $\itemsi k$ can stop at that point.
In particular, if $\ctokens k 1$ does not contain any empty tokens, then the computation of $\itemsi k$ simplifies to $\itemsi k = \itemsj k 1$.

Above equations for computing $\allitems = \itemsi {|D|}$ show an obvious correspondence to the equations we used for defining $\allpaths$ in 
Section~\ref{sec:lldef}. And indeed, Earley's algorithm with local lexing is correct with respect to the local lexing semantics:

\begin{theorem}[Correctness of Earley's Algorithm with Local Lexing]\label{th:llearleycorrectness}
Earley's algorithm with local lexing is both sound and complete, i.e.
\[D \in \sigmalang \quad\text{iff}\quad \exists\, \alpha.\ (\startsymbol \rightarrow \alpha \itemdot, 0, |D|) \in \allitems.\] 
\end{theorem}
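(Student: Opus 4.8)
The plan is to establish soundness and completeness simultaneously by proving a tight invariant relating the item sets $\itemsj k u$ and $\itemsi k$ produced by the algorithm to the path sets $\paths k u$ and $\paths k \infty$ of the local lexing semantics. Concretely, I would introduce the standard notion of a \emph{valid item}: an item $(N \rightarrow \alpha \itemdot \beta, i, j)$ is valid at stage $(k,u)$ if there is a path $p \in \paths k u$ (resp.\ $\paths {j} \infty$ at the appropriate boundary) with $\charslen{p} = j$, a split of $\terminalsof p$ such that the prefix up to position $i$ is derived from a context in which $N$ is reachable (i.e.\ $\startsymbol \derives \terminalsof{p_{<i}}\, N\, \delta$ for some $\delta$), and $\alpha \derives \terminalsof{p}$ restricted to the segment between positions $i$ and $j$. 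The key lemma is then: for every $k$ and every $u$, the item set $\itemsj k u$ equals exactly the set of items valid with respect to $\paths k u$, and correspondingly $\itemsi k$ corresponds to $\paths k \infty$ and $\ctokens k u$ corresponds to $\selectedtokens k u$.

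The proof of this lemma proceeds by a double induction: an outer induction on the position $k$ and an inner induction on the iteration index $u$, mirroring exactly the nested recursive structure in both the semantics (Section~\ref{sec:lldef}) and the algorithm (the equations for $\itemsj k u$, $\ctokens k u$). For the base case $k = 0$, $u = 0$ one checks that $\pi_0\,\emptyset\,\operatorname{Init}$ computes precisely the items valid for $\paths 0 0 = \{\emptyseq\}$; this is essentially the classical correctness argument for Earley's predictor/completer closure, and it is where Theorem~\ref{th:earleycorrectness} is recovered as the degenerate special case. For the inductive step, I would argue in two directions. Soundness (every item in $\itemsj k {u+1}$ is valid): each of $\operatorname{Predict} k$, $\operatorname{Complete} k$, $\operatorname{Scan}\ \ctokens k {u+1}\ k$ preserves validity — Predict and Complete by the usual grammar-derivation bookkeeping, and the new Scan because $(X,c) \in \ctokens k {u+1} = \operatorname{Tokens}\ \ctokens k u\ k\ \itemsj k u$ forces $X \in \lexer(X)(D,k)$ and $X$ expected by some valid item, so appending the token $(X,c)$ to the witnessing path lands in $\appendtokens k\ {\selectedtokens k {u+1}}\ {\paths k u} \subseteq \paths k {u+1}$. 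Completeness (every valid item is eventually generated): given a valid item with a witnessing path $p$, one peels off the last production step or the last token of $p$ and appeals to the induction hypothesis for the shorter witness, checking that the relevant operator ($\operatorname{Complete}$, $\operatorname{Predict}$, or $\operatorname{Scan}$) reintroduces the item; since $\pi_k$ is a least fixed point of a monotone operator, finitely many applications suffice for any particular item. One must separately verify that $\operatorname{Tokens}$ and the selector line up: $\operatorname{Tokens}\ \ctokens k u\ k\ \itemsj k u = \selector(\selectedtokens k u, \admissibletokens k {u+1})$, which follows because ``$X$ expected by a valid item in $\itemsj k u$'' is equivalent to ``$\terminalsof{p X} \in \prefixlang$ for some $p \in \paths k u$ with $\charslen p = k$'' — this is exactly the classical fact that Earley items at bin $k$ capture all viable prefixes, now relativised to $\paths k u$.

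Once the invariant is in place, the theorem is immediate: take $k = |D|$. By the lemma, $\allitems = \itemsi{|D|}$ corresponds to $\allpaths = \paths{|D|}\infty$, and $(\startsymbol \rightarrow \alpha \itemdot, 0, |D|) \in \allitems$ for some $\alpha$ holds iff there is $p \in \allpaths$ with $\charslen p = |D|$ and $\startsymbol \derives \terminalsof p$, i.e.\ $\terminalsof p \in \lang$ — which is precisely $\locallexing(D) \neq \emptyset$, i.e.\ $D \in \sigmalang$.

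The main obstacle I anticipate is the bookkeeping around \emph{empty tokens} and the consequent need for the fixpoint iteration in $u$ to run past $u = 1$. In the absence of empty tokens the Scan at bin $k$ only produces items in bin $k+1$, the $\ctokens k u$ chain stabilises after one step, and the argument collapses to the classical one; with empty tokens, scanning can re-populate bin $k$ itself, enlarging the set of items valid there, hence enlarging $\admissibletokens k {u+1}$ via $\operatorname{Tokens}$, hence $\selectedtokens k {u+1}$, and so on — and one must show the algorithm's chain $\itemsj k 0 \subseteq \itemsj k 1 \subseteq \cdots$ climbs in lockstep with the semantics' chain $\paths k 0 \subseteq \paths k 1 \subseteq \cdots$ and that the two unions $\itemsi k$ and $\paths k \infty$ agree in the limit. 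This is exactly the subtlety the paper flags when justifying why $\selector$ takes two arguments: monotonicity of the $\selectedtokens k u$ chain is what makes the inner induction go through, so the proof must lean on that monotonicity at precisely this point. Handling the possibly-infinite iteration correctly — ensuring that every valid item appears at some finite stage $u$ even though the fixpoint may only be reached in the limit — is the delicate part, and is presumably where the bulk of the 230 pages of Isabelle/HOL formalisation is spent.
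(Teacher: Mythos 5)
Your proposal follows essentially the same route as the paper: a path-relative notion of item validity (Definition~\ref{def:validity}), an induction over position $k$ and iteration $u$ matching $\itemsj k u$ against $\paths k u$ and $\ctokens k u$ against $\selectedtokens k u$ so as to establish $\allitems = \generateditems{\allpaths}$, with the monotonicity of the two-argument selector carrying the inner induction, and a final read-off at $k = |D|$. The one point your wrap-up elides is that validity of $(\startsymbol \rightarrow \alpha \itemdot, 0, |D|)$ only yields a witness $p$ whose initial segment of empty tokens must be dropped before $\startsymbol \derives \terminalsof{p}$ holds, and one then needs the paper's disentanglement result (Theorem~\ref{th:disentanglement}) to conclude that the truncated sequence is still in $\allpaths$ and hence in $\locallexing(D)$.
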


\begin{proof}See Section~\ref{sec:proof}.\end{proof}

The above correctness result comes with a caveat: It may be the case that the computation requires an infinite amount of time and space. 
If the grammar is finite though, then for an input $D \in \Sigma^*$ the size of $\allitems$ is bounded by 
\[ \left({{|D| + 1}\choose 2} + |D| + 1\right) \sum\limits_{N \rightarrow \alpha \in \rules} 1 + |\alpha|,\] 
and thus the computation will require only a finite amount of time and space (assuming $\selector$ and $\lexer$ require only a finite amount in the first place).

\paragraph{Practical Implementation}
We have developed a practical library for local lexing written in Scala/Scala.js which can be used to try out the examples in Section~\ref{sec:applications}~\cite{locallexingprototype}. It is a fairly naive proof-of-concept implementation and not optimized for data structures at all. For future versions of the library we plan to examine which of the many established ideas for making Earley parsing faster also apply (at least partially) to our case. 

\section{Proof of Theorem~\ref{th:llearleycorrectness}}
\label{sec:proof}
In this section we give a short outline of the proof of Theorem~\ref{th:llearleycorrectness}. The full formal proof is available as Isabelle/HOL 2016 theory files and has been fully machine-checked for correctness~\cite{locallexingtheories}. To convince yourself that the formal proof really proves Theorem~\ref{th:llearleycorrectness} we recommend first studying theories \texttt{CFG}, \texttt{LocalLexing} and \texttt{LLEarleyParsing}. These contain 
the basic definitions in (almost) the same notation as presented in this paper. You should then proceed to look at theory \texttt{MainTheorems}. 
It contains the theorem \texttt{Correctness} which is the machine-checked counterpart of Theorem~\ref{th:llearleycorrectness}. 

\paragraph{Proof Outline} 
There is an intuitive correspondence between the sets $\paths k u$ and the sets $\itemsj k u$.  
Clarifying this correspondence is the most important step towards proving the correctness of the algorithm.
\begin{definition}[Valid and Generated Items]\label{def:validity}
We call an item \[(N \rightarrow \alpha \itemdot \beta, i, j) \in \allitems\] 
\emph{$p$-valid} for some token sequence $p \in \allpaths$ 
iff there is $u \in \{0, \ldots, |p|\}$ such that
\begin{align*}
  |\charsof{p}| &= j, \\
  |\charsof{p_0 \ldots p_{u-1}}| &= i,\\
  \startsymbol & \derives \terminalsof{p_0 \ldots p_{u-1}} N \gamma \quad\text{for some $\gamma$},\\
  \alpha &\derives \terminalsof{p_u \ldots p_{|p|-1}}.
\end{align*}
For $P \subseteq \allpaths$ we say that $P$ \emph{generates} $\generateditems{P}$, where
\[ \generateditems{P} = \left\{ x \in \allitems \mid \exists\, p \in P.\ \text{$x$ is $p$-valid}\right\}. \]
\end{definition}
This notion of validity has been inspired by the one introduced in \cite{jones-earley} where it has been defined as an absolute property of an item.
To make it work in our context, we had to define it not absolutely, but relatively with respect to a path.

The bulk of the proof consists then in proving $\allitems = \generateditems{\allpaths}$. We will not delve into the rather technical proof of this here, but we want to point out the following supporting theorem about paths: 
\begin{theorem}\label{th:generalizedcompatibility}
For all inputs $D$, for all $k \in \{0, \ldots, |D|\}$ and 
for all $u \in \{0, 1, 2, \ldots\}$ the following holds: Given $p, q \in \paths k u$ such that
$|\charsof{p}| = |\charsof{q_0 \ldots q_{n-1}}| \leq k$ for some $n \in \{0, \ldots, |q|\}$ and $[p\, q_n \ldots q_{|q|-1}] \in \prefixlang$, 
it follows that $p\, q_n \ldots q_{|q|-1} \in \paths k u$.
\end{theorem}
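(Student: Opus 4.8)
The plan is to prove Theorem~\ref{th:generalizedcompatibility} by induction on $u$, since the sets $\paths k u$ are defined recursively in $u$ and we want the statement to propagate along the monotone chain. For the base case $u = 0$, I would need to split on whether $k = 0$ or $k > 0$. When $k = 0$ we have $\paths 0 0 = \{\emptyseq\}$, so $p = q = \emptyseq$, $n = 0$, and $p\, q_0 \ldots q_{-1} = \emptyseq \in \paths 0 0$ trivially. When $k > 0$ we have $\paths k 0 = \paths {k-1} \infty = \bigcup_u \paths {k-1} u$, so $p, q$ both lie in some $\paths {k-1} {u'}$ (using monotonicity to find a common index), and the claim reduces to the same statement at position $k - 1$ — but note the hypothesis $|\charsof{p}| = |\charsof{q_0 \ldots q_{n-1}}| \le k$ only gives $\le k$, not $\le k-1$, so I would first argue that since $p, q \in \paths{k-1}\infty$, every path there has character-length $\le k - 1$ anyway (this is exactly the invariant maintained by the construction: once we move to position $k$, we only add paths with $|\charsof{\cdot}| > k - 1$, bounded by... actually bounded by the scan positions), giving $\le k - 1$ and letting the induction at the previous position apply. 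I would need to be careful here and may instead fold the base case into a single induction on $k$ as the outer induction, with $u$ the inner one.

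For the inductive step, assume the statement holds for $\paths k u$ and prove it for $\paths k {u+1} = \limit{(\appendtokens k\ {\selectedtokens k {u+1}})}{\paths k u}$. Since $\limit f X = \bigcup_{m=0}^\infty f^m(X)$, I would do an inner induction on $m$: show that whenever $p, q \in (\appendtokens k\ {\selectedtokens k {u+1}})^m(\paths k u)$ satisfy the hypotheses, then $p\, q_n \ldots q_{|q|-1}$ is already in some $(\appendtokens k\ {\selectedtokens k {u+1}})^{m'}(\paths k u)$. The key structural fact to exploit is that $\appendtokens k\ T\ P$ only extends paths $p' \in P$ with $|\charsof{p'}| = k$ by a single token $t \in T$ with $[p' t] \in \prefixlang$. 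So given $q$ in the $m$-th iterate, I would peel off the tokens of $q$ one at a time: write $q = q' \cdot t_1 \cdots t_\ell$ where $q'$ is the "ancestor" of $q$ lying in $\paths k u$ (or at an earlier iterate) with $|\charsof{q'}| \le k$, and all the appended tokens $t_1, \ldots, t_\ell$ have empty character content (since they were appended at position $k$ and the result still has $|\charsof{q}| \le k$... wait — actually $|\charsof{q_0\ldots q_{n-1}}| \le k$ only constrains a prefix of $q$; the suffix $q_n \ldots q_{|q|-1}$ could carry us past $k$). Let me restate: the tokens $q_0, \ldots, q_{n-1}$ all sit at positions $\le k$, and among these the ones appended by $\appendtokens k$ are empty tokens; the point is that $p$ and $q_0 \ldots q_{n-1}$ have the same character image and both "end" at position $|\charsof{p}| \le k$, so $p$ can be substituted for the prefix $q_0 \ldots q_{n-1}$ of $q$, and then the suffix $q_n \ldots q_{|q|-1}$ can be re-appended token by token using the same $\appendtokens k$ applications that built $q$, since the prefix-language membership $[p\, q_n \ldots q_{|q|-1}] \in \prefixlang$ is given as a hypothesis (and all intermediate prefixes of it are in $\prefixlang$ too, as $\prefixlang$ is prefix-closed).

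The main obstacle I anticipate is bookkeeping the interaction between the two nested limit operators and making precise the "ancestor" decomposition of a path inside an iterate of $\appendtokens k$ — in particular, ensuring that when I re-append the suffix tokens $q_n, \ldots, q_{|q|-1}$ onto $p$, each required token genuinely lies in the selected set $\selectedtokens k {v}$ for some stage $v$, and that $p$ itself (not just some ancestor of $p$) satisfies $|\charsof{p}| = k$ whenever I need to apply $\appendtokens k$ to it. The first re-appended token needs $|\charsof{p}| = k$; if $|\charsof{p}| < k$ strictly then $q_n \ldots q_{|q|-1}$ must be empty (no tokens), and the conclusion is immediate. So the real content is the case $|\charsof{p}| = k$, where I track, for each appended token $q_j$ ($j \ge n$), which stage's selected-token set it came from when $q$ was originally built, and reuse exactly that stage. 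Here monotonicity of the chains $\selectedtokens k 0 \subseteq \selectedtokens k 1 \subseteq \cdots$ (the property the authors emphasized motivates the two-argument selector) is what guarantees these tokens remain available, and monotonicity of the $\paths k u$ chain lets me align indices. I would also need, as a small lemma, that every path ever produced has terminal image in $\prefixlang$ — this is asserted informally in the construction ("We will have been careful...") and I would make it an explicit invariant proved alongside, or cite it as already established.
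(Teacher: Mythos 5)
Your overall plan---an outer induction on $k$, an inner one on $u$, an innermost one on the iterates of the limit operator, together with an ``ancestor'' decomposition of paths produced by $\appendtokens k$, monotonicity of the chains $\selectedtokens k 0 \subseteq \selectedtokens k 1 \subseteq \cdots$, and prefix-closure of $\prefixlang$---is the right machinery (the paper gives no written proof of this theorem; it exists only in the Isabelle development). But two of the claims you rely on are false, and one of them silently discards a case that carries real content. In the base case you justify reducing $\paths k 0 = \paths {k-1} \infty$ to position $k-1$ by asserting that every path in $\paths {k-1} \infty$ has character length at most $k-1$. That is wrong: $\appendtokens {k-1}$ appends tokens of arbitrary character length to paths ending at $k-1$, so $\paths {k-1} \infty$ contains paths reaching anywhere up to $|D|$. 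The reduction is still valid, but for a different reason: if the suffix $q_n \ldots q_{|q|-1}$ is nonempty, then $q_n$ was appended by some $\appendtokens j$ with $j \leq k-1$ to the path $q_0 \ldots q_{n-1}$, which therefore has character length exactly $j \leq k-1$; if the suffix is empty the conclusion is trivial.

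The more serious error is in the inductive step, where you claim that if $|\charsof{p}| < k$ then the suffix $q_n \ldots q_{|q|-1}$ must be empty and the conclusion is immediate. This is false: $q$ may have a token boundary strictly before $k$ followed by further tokens. Concretely, with $k = 2$ take $q = q_0 q_1$ where $|\charsof{q_0}| = |\charsof{q_1}| = 1$, $n = 1$, and $p$ a competing one-character token ending at position $1$; the suffix $q_1$ is nonempty, and $p\,q_1 \in \paths 2 u$ is a genuine claim requiring proof (it holds because the crossover already happened at position $1$, which is exactly what the outer induction on $k$, or the induction hypothesis applied to the ancestor of $q$, must deliver). As written, your case split skips this entirely. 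Relatedly, when you splice $p$ onto the ancestor $q'$ of $q$ and re-append the tokens $t_1, \ldots, t_\ell$, the induction hypothesis at stage $u$ only applies if $p$ itself lies in $\paths k u$; this holds automatically when $|\charsof{p}| < k$ (since $\appendtokens k$ only creates paths of character length at least $k$) but not when $|\charsof{p}| = k$, where you need a further induction on the iterate containing $p$. These points are repairable with the tools you name, but they are precisely where the difficulty of the proof is concentrated, and the proposal as it stands does not get through them.
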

Intuitively this means that when there are two paths $p \in \allpaths$ and $q \in \allpaths$ which meet at some position $k$, i.e. $p = a b$ and $q = c d$ with
$|\charsof a| = |\charsof c| = k$, then they can crossover, i.e. both $a d$ and $c b$ will be in $\allpaths$ as long as this makes sense with respect to the grammar.
But $p$ and $q$ are not guaranteed to arrive at the same time $u$ at position $k$, and so it might be that tokens that were around when $p$ arrived are not there anymore when $q$ arrives, and vice versa. The fact that the token sets $\selectedtokens k u$ form a monotone chain at position $k$ means that this cannot happen. 

The semantics of local lexing is defined by mutually recursive equations which intertwine lexing and parsing, 
but once all token sets $\selectedtokens{k}{\infty} = \bigcup_{u = 0}^{\infty} \selectedtokens{k}{u}$
have been established it is possible to disentangle lexical and grammatical matters again as another supporting theorem about paths shows:
\begin{theorem}\label{th:disentanglement}
Let $p$ be a sequence of tokens. Then $p \in \allpaths$ iff 
\[ \text{a)}\ \displaystyle\mathop{\forall}_{0 \leq i < |p|} p_i \in \selectedtokens{|\charsof{p_0 \ldots p_{i-1}}|}{\infty} \quad 
\text{and}\quad \text{b)}\ \terminalsof{p} \in \prefixlang .\]
\end{theorem}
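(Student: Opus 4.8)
The plan is to prove the two implications separately, in each case by unwinding the recursive definition of $\allpaths = \paths{|D|}{\infty}$: every element of every $\paths k u$ is reached from $\emptyseq\in\paths 0 0$ by finitely many steps of the recursion --- an application of $\appendtokens k$, a pass through one of the nested limits, a step in a union $\paths k\infty = \bigcup_{u}\paths k u$, or a transition $\paths{k+1}{0} = \paths k\infty$ --- so one may reason by induction on this construction. Necessity, $p\in\allpaths \Rightarrow$ a) and b), will be two easy structural inductions; sufficiency, a) and b) $\Rightarrow p\in\allpaths$, needs one auxiliary lemma and a short induction on $\seqlen p$.

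For necessity, condition b) is precisely the invariant ``$\terminalsof q\in\prefixlang$ for every $q$ in every $\paths k u$'', which holds because $\terminalsof\emptyseq = \emptyseq\in\prefixlang$, because $\appendtokens k$ only ever adds a path $qt$ subject to $\terminalsof{qt}\in\prefixlang$, and because every other clause of the construction merely forms unions. For condition a) I would establish, by the same structural induction, that for every $q\in\allpaths$ and every $i<\seqlen q$ we have $q_i\in\selectedtokens{\charslen{q_0\ldots q_{i-1}}}{\infty}$; the only nontrivial case is $q = pt$ produced by $\appendtokens k$ from some $p$ with $\charslen p = k$ and $t\in\selectedtokens k{u+1}$, where for $i<\seqlen p$ the claim is the induction hypothesis for $p$, and for $i = \seqlen p$ we have $q_i = t$, $\charslen{q_0\ldots q_{i-1}} = \charslen p = k$, and $t\in\selectedtokens k{u+1}\subseteq\selectedtokens k\infty$. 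The only thing being used is that the index $k$ attached to $\appendtokens k$ is forced to equal $\charslen p$ and agrees with the index of the selected-token set $\selectedtokens k{u+1}$ fed to it.

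For sufficiency, assume a) and b). The crux is the auxiliary statement: \emph{if $q\in\allpaths$ and $\charslen q = j$, then $q\in\paths j\infty$} --- a path that finishes at position $j$ has already been generated by the time position $j$ is completed. This is again proved by induction on the construction of $q$: when $q = pt$ is appended at position $k$ by a token with character part $c$, the induction hypothesis gives $p\in\paths k\infty$, hence $p\in\paths k m$ and $t\in\selectedtokens k{m+1}$ for a common $m$, so $q = pt\in\paths k{m+1}\subseteq\paths k\infty$; if $\seqlen c = 0$ this is already $\paths j\infty$, and if $\seqlen c\ge 1$ one climbs the chain $\paths k\infty = \paths{k+1}{0}\subseteq\paths{k+1}{\infty}\subseteq\cdots\subseteq\paths j\infty$, legitimate because the lexer guarantees $j = k+\seqlen c\le|D|$. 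Granting this lemma, I show $p_0\ldots p_{i-1}\in\allpaths$ by induction on $i$: the base case is $\emptyseq\in\paths 0 0\subseteq\allpaths$; for the step set $q = p_0\ldots p_{i-1}\in\allpaths$ and $k = \charslen q$. The lemma gives $q\in\paths k{u_1}$ for some $u_1$; condition a) together with $\selectedtokens k 0 = \emptyset$ gives $p_i\in\selectedtokens k{u_2}$ for some $u_2\ge 1$; and since $\terminalsof{p_0\ldots p_i}$ is a prefix of $\terminalsof p\in\prefixlang$ while $\prefixlang$ is prefix-closed, $\terminalsof{q\,p_i}\in\prefixlang$. Picking $u$ at least $u_1$ and $u_2-1$ and using monotonicity of the chains $\paths k 0\subseteq\paths k 1\subseteq\cdots$ and $\selectedtokens k 0\subseteq\selectedtokens k 1\subseteq\cdots$, we obtain $q\in\paths k u$ and $p_i\in\selectedtokens k{u+1}$, hence $q\,p_i\in\appendtokens k\ \selectedtokens k{u+1}\ \paths k u\subseteq\paths k{u+1}\subseteq\allpaths$. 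Taking $i = \seqlen p$ gives $p\in\allpaths$.

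The main obstacle is the auxiliary lemma of the previous paragraph: it is what forces us to track positions precisely through the doubly-indexed recursion and to exploit the ``position-monotonicity'' identity $\paths k\infty = \paths{k+1}{0}$; everything else is routine bookkeeping over the limit-and-union structure, plus the elementary facts that $\prefixlang$ is prefix-closed and that the sets $\paths k u$ and $\selectedtokens k u$ (for varying $u$ at a fixed position $k$) form monotone chains. It is worth noting that this particular theorem does not seem to need the crossover property of Theorem~\ref{th:generalizedcompatibility}: a single application of $\appendtokens k$ already supplies the local interchangeability we require, crossover being the stronger global statement presumably needed elsewhere in establishing $\allitems = \generateditems{\allpaths}$.
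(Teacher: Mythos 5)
Your proof is correct. The paper does not actually include a textual proof of this theorem --- it is stated as a supporting fact and deferred to the Isabelle formalization --- so there is nothing in the text to compare against, but your argument is the natural one: both directions unwind the well-founded construction of $\allpaths$, the forward direction reading off a) and b) as invariants of $\appendtokens{k}$, and the converse resting on your auxiliary lemma (a path with $\charslen{q}=j$ already lies in $\paths{j}{\infty}$) together with taking a common index $u$ via monotonicity of the chains $\paths{k}{0}\subseteq\paths{k}{1}\subseteq\cdots$ and $\selectedtokens{k}{0}\subseteq\selectedtokens{k}{1}\subseteq\cdots$. Your closing observation is also on point: the "common $u$" step is precisely the place where the monotonicity of the $\selectedtokens{k}{u}$ chain --- the property the paper says motivated the two-argument selector --- is indispensable, while the crossover property of Theorem~\ref{th:generalizedcompatibility} is not needed here.
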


From $\allitems = \generateditems{\allpaths}$ and with the help of Theorem~\ref{th:disentanglement} it is then straightforward to prove Theorem~\ref{th:llearleycorrectness}:
\begin{proof}
Let us first assume $D \in \sigmalang$. This implies $\locallexing(D) \neq \emptyset$, which implies that there is a $p \in \allpaths$
with $|\charsof{p}| = |D|$ and $\startsymbol \derives \terminalsof{p}$. This means there is an $\alpha$ such that $\startsymbol \rightarrow \alpha$
and $\alpha \derives \terminalsof{p}$. Therefore, $(\startsymbol \rightarrow \alpha \itemdot, 0, |D|)$ is $p$-valid and thus
\[ (\startsymbol \rightarrow \alpha \itemdot, 0, |D|) \in \generateditems{\allpaths} = \allitems. \]

On the other hand, let us assume that there is an $\alpha$ with above property. This means that
$(\startsymbol \rightarrow \alpha \itemdot, 0, |D|)$ is $p$-valid for some $p \in \allpaths$, i.e. there is 
$u \in \{0, \ldots, |p|\}$ with
\begin{align*}
  |\charsof{p}| &= |D|, \\
  |\charsof{p_0 \ldots p_{u-1}}| &= 0,\\
  \startsymbol & \derives \terminalsof{p_0 \ldots p_{u-1}} \startsymbol \gamma \quad\text{for some $\gamma$},\\
  \alpha &\derives \terminalsof{p_u \ldots p_{|p|-1}}.
\end{align*} 
Above facts together with Theorem~\ref{th:disentanglement} show that by dropping the first $u$ empty tokens from $p$ we obtain
a path \[p_u \ldots p_{|p|-1} \in \locallexing(D).\]
\end{proof}

\section{Further Related Work}
\label{sec:relatedwork}
A strong influence on our work has been the idea of \emph{blackbox} Earley parsing presented in~\cite{blackbox}. A blackbox is a (possibly third-party) parser component plugged into the Earley parsing framework by associating the blackbox with a nonterminal. Our \lexer\ component can basically be viewed as a collection
of blackboxes, but instead of associating them with nonterminals, we associate them with terminals. This makes it possible to treat blackboxes as a concept that is in principle independent from Earley parsing. Unlike the original work on blackboxes we also provide a method for disambiguation, via the selector \selector.

\section{Conclusion}
\label{sec:conclusion}

With hindsight local lexing is a simple concept, but it has taken us over two years to arrive at the concept as it is presented in this paper. Our algorithm for local lexing and the semantics of local lexing developed side-by-side during this time. There have been enough missteps along the way to finally make us formally verify our algorithm. Despite its simplicity, the examples from Section~\ref{sec:applications} show that local lexing is a versatile and unifying concept for designing syntax. We hope that you may find it useful, too. 

\bibliographystyle{abbrvnat}

\end{document}